\DeclareMathOperator*{\argmax}{argmax}
\newcommand{\noi}{\noindent}
\begin{document}
\pagestyle{plain}

\title{Optimal Interdiction of Unreactive Markovian Evaders}

\author{Alexander Gutfraind\inst{1} \and 
        Aric Hagberg\inst{2} \and
        Feng Pan\inst{3}}
        
\institute{
    Center for Applied Mathematics, 
    Cornell University, Ithaca, New York 14853
    \email{ag362@cornell.edu}
    \and 
    Theoretical Division,
    Los Alamos National Laboratory, Los Alamos, New Mexico 87545
    \email{hagberg@lanl.gov}
    \and 
    Risk Analysis and Decision Support Systems,
   Los Alamos National Laboratory, Los Alamos, New Mexico 87545
   \email{fpan@lanl.gov}
}

\maketitle

\setcounter{footnote}{0}

\begin{abstract}
The interdiction problem arises in a
variety of areas including military logistics, infectious disease
control, and counter-terrorism.  In the typical 
formulation of \textit{network} interdiction, the task of the interdictor is to
find a set of edges in a weighted network such that
the removal of those edges would maximally increase the
cost to an evader of traveling on a path through the network.

Our work is motivated by cases in which the evader has incomplete
information about the network or lacks planning time or computational
power, \textit{e.g.} when authorities set up roadblocks to catch
bank robbers, the criminals do not know all the roadblock locations
or the best path to use for their escape.  

We introduce a model of
network interdiction in which the motion of one or more evaders is
described by Markov processes and the evaders are assumed not
to react to interdiction decisions.  The interdiction 
objective is to find an edge set of size $B$, 
that maximizes the probability of capturing the evaders.

We prove that similar to the standard least-cost formulation for
deterministic motion 
this interdiction problem is also NP-hard.  But unlike that
problem our interdiction problem is submodular and 
the optimal solution can be approximated within $1-1/e$
using a greedy algorithm.  
Additionally, we exploit submodularity 
through a priority evaluation strategy 
that eliminates the linear complexity scaling in
the number of network edges and speeds up the solution
by orders of magnitude.
Taken together the results bring closer the goal of
finding realistic solutions to the interdiction problem on
global-scale networks.%
\footnote{{\em This article is released under Los Alamos National Laboratory LA-UR-09-00560}} \nonumber
\end{abstract}

\section{Introduction}
\label{sec:introduction}

Network interdiction problems have two opposing actors: an ``evader''
(\emph{e.g.} smuggler) and an ``interdictor'' (\emph{e.g.} 
border agent.)  The evader attempts to minimize some objective
function in the network, \emph{e.g.} the probability of capture while
traveling from network location $s$ to location $t$, while the
interdictor attempts to limit the evader's success by removing network nodes or
edges. Most often the interdictor has limited resources and can thus
only remove a very small fraction of the nodes or edges. The standard 
formulation
is the max-min problem where the interdictor plays first and chooses at most $B$ edges
to remove, while the evader finds the least-cost path on the remaining network.
This is known as the $B$ most vital arcs problem~\cite{Corley-1982-most}.

This least-cost-path formulation is not suitable for some
interesting interdiction scenarios.  Specifically in many practical problems there
is a fog of uncertainty about the underlying properties of the network
such as the cost to the evader in traversing an edge (arc, or link) in
terms of either resource consumption or detection probability.  In
addition there are mismatches in the cost and risk computations
between the interdictor and the evaders (as well as between different
evaders), and all agents have an interest in hiding their actions.  For
evaders, most least-cost-path interdiction models make optimal
assumptions about the evader's knowledge of the
interdictor's strategy, namely, the choice of interdiction set. In many
real-world situations evaders likely fall far short of the optimum.
This paper, therefore, considers the other limit case, 
which for many problems is more applicable, when the
evaders do not respond to interdictor's decisions.
This case is particularly useful for problems where the evader
is a process on the network rather than a rational agent.

Various formulations of the network interdiction problem have existed 
for many decades now.  The problem likely originated 
in the study of military supply chains and interdiction of transportation
networks~\cite{Mcmasters-1970-optimal,Ghare-1971-optimal}.  But in
general, the network interdiction problem applies to wide variety of
areas including control of infectious disease~\cite{Pourbohloul05}, and
disruption of terrorist networks~\cite{Farley03}. Recent interest in
the problem has been revived due to the threat of smuggling of nuclear
materials~\cite{Pan-2003-models}.  In this context interdiction of
edges might consist of the placement of special radiation-sensitive
detectors across transportation links.
For the most-studied formulation, that of max-min interdiction
described above~\cite{Corley-1982-most}, 
it is known that the problem is NP-hard~\cite{Ball89,Bar-noy-1995-complexity}
and hard to approximate~\cite{Boros06-inapproximability}.

\section{Unreactive Markovian Evader}
\label{sec:general-model}
The formulation of a stochastic model where the evader has limited or
no information about interdiction can be motivated by the following
interdiction situation.  Suppose bank robbers (evaders) want to escape
from the bank at node $s$ to their safe haven at node $t_1$ or node $t_2$. 
The authorities (interdictors) are able to position roadblocks at a few of
the roads on the network between $s$, $t_1$ and $t_2$.  The robbers might not
be aware of the interdiction efforts, or believe that they will be
able to move faster than the authorities can set up roadblocks.  They
certainly do not have the time or the computational resources to
identify the global minimum of the least-cost-path problem.

Similar examples are found in cases where
the interdictor is able to clandestinely remove edges or nodes
(\textit{e.g.} place hidden electronic detectors), or the evader
has bounded rationality or is constrained in strategic choices.
An evader may even have no intelligence of any kind and
represent a process such as Internet packet traffic that the
interdictor wants to monitor.  Therefore, our fundamental assumption
is that the evader does not respond to interdiction decisions.  This
transforms the interdiction problem from the problem of increasing the
evader's cost or distance of travel, as in the standard formulation,
into a problem of directly capturing the evader as explicitly defined
below. Additionally, the objective function acquires certain useful
computational properties discussed later.

\subsection{Evaders}

In examples discussed above, much of the challenge in interdiction
stems from the unpredictability of evader motion. Our approach is to
use a stochastic evader model to capture this 
unpredictability~\cite{Pan-2003-models,Gutfraind08markovian}.
We assume that an evader is traveling from a source node $s$ to a target
node $t$ on a graph $G(N,E)$ according to a guided random walk defined 
by the Markovian transition matrix ${\bf M}$; from node $i$
the evader travels on edge $(i,j)$ with probability $M_{ij}$. 
The transition 
probabilities can be derived, for example, from the cost and risk of
traversing an edge~\cite{Gutfraind08markovian}.  

Uncertainty in the evader's source
location $s$ is captured through a probability vector ${\bf a}$.
For the simplest case of an evader starting
known location $s$, $a_s=1$ and the rest of the $a_i$'s are $0$.
In general the probabilities can be distributed arbitrarily
to all of the nodes as long as $\sum_{i\in N} a_i=1$.
Given ${\bf a}$, the probability that the evader is at location $i$ 
after $n$ steps is the $i$'th entry in the vector 
${\boldsymbol\pi}^{(n)}={\bf a}{\bf M}^n$

When the target is reached the evader exits the network and therefore, 
$M_{tj} = 0$ for all outgoing edges from $t$ and also $M_{tt}=0$.
%It is not required that the underlying network is connected.
The matrix {\bf M} is assumed to satisfy the following condition:
for every node $i$ in the network either there is a positive probability of reaching
the target after a sufficiently large number of transitions, or 
the node is a dead end, namely $M_{ij} = 0$ for all $j$.
With these assumptions the Markov chain is absorbing and
the probability that the evader will eventually reach the target is $\leq 1$.
For equality to hold it is sufficient to have the extra conditions that the network 
is connected and that for all nodes $i\neq t$, $\sum_j{ M_{ij} }=1$
(see ~\cite{Grinstead97}.)

A more general formulation allows multiple evaders to traverse the network, 
where each evader represents a threat scenario or a particular adversarial group.
Each evader $k$ is realized with probability $w^{(k)}$ ($\sum_k w^{(k)}=1$) and is described by a
possibly distinct source distribution ${\bf a}^{(k)}$,
transition matrix ${\bf M}^{(k)}$, and target node $t^{(k)}$.  
This generalization makes it possible to represent any joint probability distribution $f(s,t)$ of
source-target pairs, where each evader is a slice of $f$ at a specific
value of $t$: ${\bf a}^{(k)}|_s={f(s,t^{(k)})}/\sum_s{f(s,t^{(k)})}$ and
$w^{(k)}=\sum_s{f(s,t^{(k)})}$.  In this high-level view, the evaders
collectively represent a stochastic process connecting pairs of
nodes on the network. This generalization has practical applications to problems of monitoring
traffic between any set of nodes when there is a limit on the number
of ``sensors''.  The underlying network could be \textit{e.g.}
a transportation system, the Internet, or water distribution 
pipelines.

\subsection{Interdictor}

The interdictor, similar to the typical
formulation, possesses complete knowledge about the network and evader
parameters ${\bf a}$ and ${\bf M}$.  Interdiction of an edge at index
$i,j$ is represented by setting $r_{ij}=1$ and $r_{ij}=0$ if the edge is
not interdicted. In general some edges are more suitable for
interdiction than others.  To represent this, we let $d_{ij}$ be the
interdiction efficiency, which is the probability that interdiction of
the edge would remove an evader who traverses it.

So far we have focused on the interdiction of edges,
but interdiction of nodes can be treated similarly as
a special case of edge interdiction in which all the edges leading to
an interdicted node are interdicted simultaneously.  For 
brevity, we will not discuss node interdiction further 
except in the proofs of Sec.~\ref{sec:proofs} where we consider
both cases.

\subsection{Objective function}

Interdiction of an unreactive evader is the problem of maximizing the
probability of stopping the evader before it reaches the target.
Note that the fundamental matrix for ${\bf M}$, using ${\bf I}$ to denote the identity matrix is
\begin{equation}
\label{eq:expected}
{\bf N} = {\bf I}+ {\bf M} + {\bf M}^2 + \cdots = ({\bf I}-{\bf M})^{-1} \,, 
\end{equation}
and {\bf N} gives all of the possible transition sequences between pairs of nodes before
the target is reached.  
Therefore given the starting probability ${\bf a}$, 
the expected number of times the evader reaches each node 
is (using (\ref{eq:expected}) and linearity of expectation)
\begin{equation}
    {\bf a}{\bf N}={\bf a} ({\bf I} - {\bf M})^{-1}\,.
    \label{eq:nodehits}
\end{equation}
If edge $(i,j)$ has been interdicted ($r_{ij}=1$) and the evader traverses
it then the evader will not reach $j$ with probability $d_{ij}$.
The probability of the evader reaching $j$ from $i$ becomes
\begin{equation}
\hat{M}_{ij} = M_{ij} - M_{ij} r_{ij} d_{ij}\,.
\end{equation}
This defines an interdicted version of the ${\bf M}$ matrix, the matrix ${\bf \hat{M}}$.  

The probability that a single evader does not reach the target
is found by considering the $t$'th entry in the vector $E$
after substituting ${\bf \hat{M}}$ for ${\bf M}$ in Eq.~(\ref{eq:nodehits}),
\begin{equation}
J({\bf a},{\bf M},{\bf r},{\bf d})=
1
-\left({\bf a}\left[{\bf I}-\left({\bf M}
-{\bf M}\odot {\bf r}\odot {\bf d}\right)\right]^{-1}
\right)_{t} 
\,,\label{eq:evader-cost}
\end{equation}
where the symbol $\odot$ means element-wise (Hadamard) multiplication. 
In the case of multiple evaders, the objective $J$ is a weighted sum,
\begin{equation}
J=\sum_{k}w^{(k)}J^{(k)} \,,\label{eq:weighted-cost}
\end{equation} where, 
for evader $k$,
\begin{equation}
J^{(k)}({\bf a}^{(k)},{\bf M}^{(k)},{\bf r},{\bf d})
=1-\left({\bf a}^{(k)}\left[{\bf I}-\left({\bf M}^{(k)}-{\bf M}^{(k)}
\odot {\bf r}\odot {\bf d}\right)\right]^{-1}\right)_{t^{(k)}} \,.
\label{eq:multiple-evader-cost}
\end{equation}

Equations (\ref{eq:evader-cost}) and (\ref{eq:weighted-cost})
define the \textit{interdiction probability}.  Hence the
\emph{Unreactive Markovian Evader} interdiction problem (UME) is 
\begin{equation}
\argmax_{{\bf r}\in F}\; J({\bf a},{\bf M},{\bf r},{\bf d}) \,,
\label{eq:UME}
\end{equation}
where $r_{ij}$ represents an interdicted edge
chosen from a set $F\subseteq 2^E$ of 
feasible interdiction strategies.
The simplest formulation is the case when interdicting an
edge has a unit cost with a fixed budget $B$ and 
$F$ are all subsets of the edge set $E$ of size at most $B$.
This problem can also be written as a mixed integer program as shown
in the Appendix.

Computation of the objective function can be achieved 
with $\sim\frac{2}{3}\left|N\right|^{3}$ operations 
for each evader, where $\left|N\right|$ is the number of nodes,
because it is dominated by the cost of Gaussian elimination solve
in Eq.~(\ref{eq:evader-cost}).
If the matrix ${\bf M}$ has special structure
then it could be reduced to $O(\left|N\right|^{2})$~\cite{Gutfraind08markovian}
or even faster.
We will use this evader model in the simulations, but in general
the methods of Secs.~\ref{sec:proofs} and~\ref{sec:blind} would work for 
any model that satisfies the hypotheses on ${\bf M}$ and even for non-Markovian evaders as long
as it is possible to compute the equivalent of the objective function in Eq.~(\ref{eq:evader-cost}).

Thus far interdiction was described as the removal of the evader from
the network, and the creation of a sub-stochastic process 
${\bf \hat{M}}$.  However, the mathematical formalism is 
open to several alternative interpretations.
For example interdiction could be viewed 
as redirection of the evader into a special absorbing state 
- a ``jail node''.
In this larger state space the evader even remains Markovian.
Since ${\bf \hat{M}}$ is just a mathematical device it is not even 
necessary for ``interdiction'' to change the physical traffic
on the network. In particular, in monitoring problems 
``interdiction'' corresponds to labeling of intercepted traffic
as ``inspected'' - a process that involves no removal or redirection.

\section{Complexity}
\label{sec:proofs}

This section proves technical results about the interdiction 
problem~(\ref{eq:UME})
including the equivalence in complexity of node and edge interdiction
and the NP-hardness of node interdiction (and therefore of edge interdiction).
Practical algorithms are found in the next section.

We first state the decision problem for~(\ref{eq:UME}).
\begin{definition}
{\bf UME-Decision}.

\noi \emph{Instance}: A graph $G(N,E)$, interdiction efficiencies $0\leq d_{i} \leq 1$ for each $i \in N$, budget $B \ge 0$,  
and real $\rho \geq 0$; a set $K$ of evaders, such that for each $k\in K$ there is a matrix ${\bf M}^{(k)}$ on $G$, 
a sources-target pair $({\bf a}^{(k)},t^{(k)})$ and a weight $w^{(k)}$.\\

\noi \emph{Question}:
Is there a set of (interdicted) nodes $Y$ of size $B$ such that 
\begin{equation} \label{eq:npc-ume}
    \sum_{k\in K}{w^{(k)}\left({\bf a}^{(k)}
      \left({\bf I}-\hat{{\bf M}}^{(k)}\right)^{-1}\right)_{t^{(k)}}} 
\le \rho ? 
\end{equation}
The matrix $\hat{{\bf M}}^{(k)}$ is constructed from ${\bf M}^{(k)}$ by replacing element $M^{(k)}_{ij}$ by
$M^{(k)}_{ij}(1-d_{i})$ for $i\in Y$ and each $(i,j)$ corresponding to edges $\in E$ leaving $i$.  
This sum is the weighted probability of the evaders reaching their targets.
\qed
\end{definition}

The decision problem is stated for node interdiction but the
complexity is the same for edge interdiction, as proved next.

\newtheorem{thm}{Theorem}
\newtheorem{lem}[thm]{Lemma}
\begin{lem}
Edge interdiction is polynomially equivalent to node interdiction.
\end{lem}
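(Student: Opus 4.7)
\medskip

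\noindent\textbf{Proof plan.} The plan is to exhibit two polynomial-time, gadget-based reductions, one in each direction, and argue that each preserves the evader-reaches-target probability appearing in~(\ref{eq:npc-ume}) exactly, so that UME-Decision for edges and for nodes have the same yes-instances under the reductions with the budget $B$ mapped to itself.

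\medskip

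\noindent\textbf{Node interdiction reduces to edge interdiction.}
Given an instance on $G(N,E)$ with node efficiencies $d_i$, I would build $G'$ by splitting each node $i\in N$ into an ``in-copy'' $i^{\text{in}}$ and an ``out-copy'' $i^{\text{out}}$ joined by a new directed edge $e_i=(i^{\text{in}},i^{\text{out}})$; every original edge $(i,j)\in E$ becomes $(i^{\text{out}},j^{\text{in}})$. The new transition matrix $M'$ is defined by $M'_{i^{\text{in}},i^{\text{out}}}=1$ and $M'_{i^{\text{out}},j^{\text{in}}}=M_{ij}$, with source mass $a_i$ placed on $i^{\text{in}}$ and target $t^{\text{in}}$. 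Set the edge efficiency $d'_{e_i}=d_i$ and $d'=0$ on all other edges (or simply restrict $F$ to the $e_i$'s). Every trajectory in $G$ corresponds bijectively to a trajectory in $G'$ that alternates $i^{\text{in}}\!\to\!i^{\text{out}}\!\to\!j^{\text{in}}$, and crossing the internal edge $e_i$ is exactly the event that was previously ``visiting node $i$''; thus interdicting $Y\subseteq N$ in $G$ with per-visit removal probability $d_i$ has the same reach-$t$ probability as interdicting $\{e_i:i\in Y\}$ in $G'$. The construction is linear in $|N|+|E|$, and budget $B$ maps to $B$.

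\medskip

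\noindent\textbf{Edge interdiction reduces to node interdiction.}
Conversely, given an edge-interdiction instance with efficiencies $d_{ij}$, I would subdivide each edge $(i,j)\in E$ by inserting a new node $v_{ij}$, replacing $(i,j)$ with the pair $(i,v_{ij})$ and $(v_{ij},j)$, and setting $M'_{i,v_{ij}}=M_{ij}$ and $M'_{v_{ij},j}=1$. Assign node efficiency $d'_{v_{ij}}=d_{ij}$ to the new nodes and $d'=0$ to original nodes (or restrict $F$ to the $v_{ij}$'s). The evader now visits $v_{ij}$ if and only if it would have traversed $(i,j)$ in the original chain, so interdicting an edge set $R\subseteq E$ in $G$ and interdicting $\{v_{ij}:(i,j)\in R\}$ in $G'$ give identical reach-$t$ probabilities. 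Again the blow-up is $O(|N|+|E|)$ and the budget is preserved.

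\medskip

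\noindent\textbf{Anticipated difficulty.} The reductions themselves are routine gadgets; the only real care needed is the bookkeeping that certifies the objective~(\ref{eq:npc-ume}) is invariant. Concretely, I would verify this by showing the fundamental matrices match under the gadget: for the node-splitting gadget, collapsing the deterministic $i^{\text{in}}\!\to\!i^{\text{out}}$ transitions in $({\bf I}-\hat{M}')^{-1}$ recovers $({\bf I}-\hat{M})^{-1}$ restricted to the original nodes, and analogously for the subdivision gadget. Once this algebraic identity is checked (for a single evader, and hence by linearity of~(\ref{eq:weighted-cost}) for the weighted sum over $K$), both directions of the equivalence follow and the lemma is proved.
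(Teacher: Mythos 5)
Your proposal is correct and matches the paper's proof essentially exactly: both directions use the same gadgets (subdividing each edge with a new node carrying efficiency $d_{ij}$, and splitting each node into an in/out pair joined by an internal edge carrying efficiency $d_i$, with the transition matrix rerouted accordingly). The only difference is presentational --- you sketch the bookkeeping verifying that the reach-$t$ probability is preserved, which the paper leaves implicit.
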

\begin{proof}
To reduce edge interdiction to node interdiction, take the 
graph $G(N,E)$ and construct $G'$ by splitting the edges.
On each edge $(i,j)\in E$ insert a node $v$ to
create the edges $(i,v),(v,j)$
and set the node interdiction efficiency 
$d_v=d_{ij}, d_i=d_j=0$, 
where $d_{ij}$ is the interdiction efficiency of $(i,j)$ in $E$. 

Conversely, to reduce node interdiction
to edge interdiction, construct from $G(N,E)$ a graph $G'$ by
representing
each node $v$ with interdiction efficiency $d_{v}$ by nodes
$i,j$, 
joining them with an edge $(i,j)$, and 
setting $d_{ij}=d_v$.
Next, change the transition matrix
${\bf M}$ of each evader such that all transitions into $v$ now move into
$i$ while all departures from $v$ now occur from $j$,
and $M_{ij}=1$. In particular, if $v$ was an evader's
target node in $G$, then $j$ is its target node in $G'$. \qed
\end{proof}

Consider now the complexity of node interdiction. One source of
hardness in the UME problem 
stems from the difficulty of avoiding the case where
multiple edges or nodes are interdicted on the same evader path - a
source of inefficiency.  This resembles the \emph{Set Cover} 
problem~\cite{Karp72}, where including an element in two sets is redundant in
a similar way, and this insight motivates the proof.

First we give the definition of the set cover decision problem.
\begin{definition} 
{\bf Set Cover.} For a collection $C$ of subsets of a finite set $X$,
and a positive integer $\beta$, does $C$ contain a cover of size $\leq
\beta$ for $X$? 
\qed
\end{definition}

Since {\em Set Cover} is NP-complete, 
the idea of the proof is to construct a network $G(N,E)$ 
where each subset $c\in C$ is represented
by a node of $G$, and each element $x_i\in X$ is represented by an evader. 
The evader $x_i$ is then made to traverse all nodes 
$\left\{c\in C | x_i\in c\right\}$.
The set cover problem is exactly problem of finding $B$ nodes 
that would interdict all of the evaders (see Fig.~\ref{fig:reduction}.)
\begin{figure}
\includegraphics[width=\columnwidth]{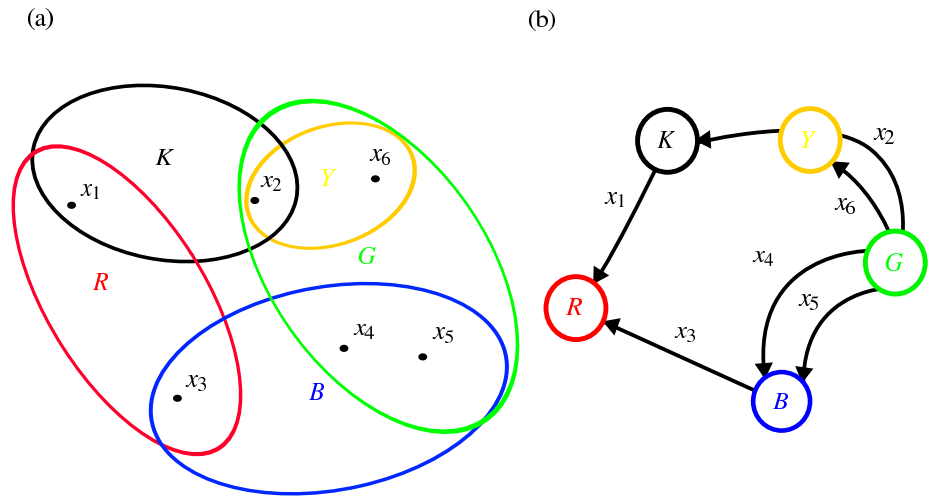}
\caption{\label{fig:reduction}
  Illustration of the reduction of Set Cover to UME-Decision. 
  (a) A set cover problem on elements
  $x_1\dots x_6\in X$ with subsets
  $K=\{x_1,x_2\},R=\{x_1,x_3\},B=\{x_3,x_4,x_5\},G=\{x_2,x_4,x_5,x_6\},Y=\{x_2,x_6\}$ contained in $X$.  
  (b) The induced interdiction problem with each subset represented by
  a node and each
  element by an evader. Each arrow indicates the path of a single
  evader.
}
\end{figure}

\begin{thm}
The UME problem is NP-hard even if $d_{i}=h$ (constant) $\forall$
nodes $i\in N$.
\end{thm}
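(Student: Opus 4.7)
The natural approach is to follow the hint already given in the text and reduce \emph{Set Cover} to UME-Decision. Given a Set Cover instance $(C,X,\beta)$, I would build a graph $G(N,E)$ whose nodes are one vertex $v_c$ for each subset $c\in C$ together with a single common target $T$; for each element $x_i\in X$ I would introduce one evader whose source distribution $\mathbf{a}^{(i)}$ is concentrated on the first vertex $v_c$ (in any fixed ordering of $C$) satisfying $x_i\in c$, and whose transition matrix $\mathbf{M}^{(i)}$ is deterministic, walking in order through every $v_c$ with $x_i\in c$ and then stepping into $T$. Each evader receives weight $w^{(i)}=1/|X|$, the interdiction efficiencies are set to the constant $h=1$, the budget is $B=\beta$, the threshold is $\rho=0$, and $E$ is taken to contain exactly the transitions prescribed by the $\mathbf{M}^{(i)}$. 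This construction is clearly polynomial in $|C|+|X|$.

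For correctness I would argue both directions. In the forward direction, if $C'\subseteq C$ of size $\leq\beta$ covers $X$, then interdicting $Y=\{v_c:c\in C'\}$ captures every evader (each evader's path contains some $v_c\in Y$, and with $h=1$ the chain is killed as soon as the evader attempts to leave an interdicted node), so the weighted sum in~(\ref{eq:npc-ume}) is exactly $0\leq\rho$. In the backward direction, if $|Y|\leq\beta$ makes that sum at most $0$ then, since each summand is nonnegative, every evader must be captured. Interdicting the absorbing target $T$ has no effect on any chain because $M_{Tj}=0$, so we may assume $Y\subseteq\{v_c:c\in C\}$; then $\{c:v_c\in Y\}$ is a set cover of $X$ of size $\leq\beta$. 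Combined with the NP-completeness of Set Cover and Lemma~1, this establishes NP-hardness of UME.

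The main conceptual obstacle is forcing the interdictor to behave exactly like a set-cover solver: every useful node it could pick must correspond to a subset, and removing it must capture precisely the evaders associated with that subset's elements. Placing each evader's source directly at its first $v_c$ (rather than at an auxiliary private source vertex) is what achieves this, since it removes any ``cheap singleton'' captures that would otherwise break the correspondence with Set Cover. A routine check confirms the hypotheses on $\mathbf{M}$ from Section~\ref{sec:general-model}: each $\mathbf{M}^{(i)}$ is deterministic and absorbing, and reaches $T$ with probability $1$ in the uninterdicted chain. Should one want hardness for an arbitrary fixed $h\in(0,1]$ rather than only $h=1$, the same graph works after choosing $\rho$ to separate the ``cover'' case (value $\leq 1-h$) from the ``no cover'' case (value $\geq 1/|X|$), amplifying the gap by replicating evaders if necessary; I would include this as a short addendum.
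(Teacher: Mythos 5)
Your reduction is essentially the paper's: subsets of $C$ become nodes, elements of $X$ become evaders that walk deterministically through their covering subsets, $d=1$, $B=\beta$, $\rho=0$, $w^{(k)}=1/|X|$, and both directions of the equivalence are argued the same way. Your single shared absorbing target $T$ is a mild simplification: it lets you treat elements with exactly one covering set uniformly (there is always an outgoing edge to interdict), whereas the paper must split such a subset node into two nodes joined by an edge; and you correctly note that interdicting $T$ is harmless since node interdiction only rescales outgoing transition probabilities. Two small points. First, you never handle elements covered by no subset (the paper does, via an evader whose source equals its target); this is trivial, since such a Set Cover instance is a detectable no-instance, but it should be stated. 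Second, your addendum for an arbitrary constant $h\in(0,1)$ does not work as sketched: an evader passing through a single interdicted node reaches the target with probability $1-h$, so the ``cover'' value is only bounded above by $1-h$ while the ``no cover'' value is bounded below by $1/|X|$, and the required separation $1-h<1/|X|$ fails for fixed $h<1$ and large $|X|$; moreover, replicating evaders leaves each evader's individual capture probability unchanged, so it amplifies nothing. Since the theorem (as the paper proves it) only needs hardness for some constant efficiency, your $h=1$ argument already suffices and the addendum should be repaired or dropped.
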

\begin{proof}

First we note that for a given a subset $Y\subseteq N$ with $|Y| \le B$, 
we can update ${\bf M}^{(k)}$ and compute (\ref{eq:npc-ume}) to verify
\emph{UME-Decision} as a yes-instance. 
The number of steps is bounded by $O(|K||N|^3)$. Therefore,
\emph{UME-Decision} is in NP.

To show \emph{UME-Decision} is NP-complete, 
reduce \emph{Set Cover} with $X,C$ to \emph{UME-Decision} on a suitable graph $G(N,E)$. 
It is sufficient to consider just the special case where
all interdiction efficiencies are equal, $d_i=1$.  
For each $c\in C$, create a node $c$ in $N$. 
We consider three cases for elements $x\in X$; elements
that have no covering sets, elements that have one covering set,
and elements that have at least two covering sets.

Consider first all $x\in X$ which have at least two covering sets. For
each such $x$ create an evader as follows.  Let $O$ 
be any ordering of the collection of subsets covering
$x$.  Create in $E$ a Hamiltonian path of $|O|-1$ edges to
join sequentially all the elements of $O$, assigning the start, $a$
and end $t$ nodes in agreement with the ordering of
$O$.  Construct an evader transition matrix of size
$|C|\times|C|$ and give the evader transitions probability
$M_{ij}= 1$ iff $i,j \in C$ and $i<j$, and $=0$ otherwise.

For the case of zero
covering sets, that is, where $\exists x\in X$ such that $x\notin S$
for all $S\in C$, represent $x$ by an
evader whose source and target are identical: no edges are added to
$E$ and the transition matrix is ${\bf M}=0$.  Thus, $J$ in
Eq.~(\ref{eq:evader-cost}) is non-zero regardless of interdiction
strategy.  

For the case when $x$ has just one covering set, that is, when $\exists
x\in X$ such that there is a unique $c \in C$ with $x\in c$,
represent $c$ as two nodes $i$ and $j$ connected by an edge exactly as
in the case of more than one cover above.  After introducing $j$, add
it to the middle of the path of each evader $x$ if $i$ is in the path
of $x$, that is, if $c \in C$.  It is equivalent to supposing that $C$
contains another subset exactly like $c$.  This supposition does not
change the answer or the polynomial complexity of the given instance
of \emph{Set Cover}.  To complete the reduction, 
set $B = \beta$, $\rho = 0$, $X=K$, $w^{(k)}=1/|X|$
and $d_{i} = 1$, $\forall i \in N$.

Now assume \emph{Set Cover} is a yes-instance with a
cover $\hat{C}\subseteq C$.  We set the interdicted transition
matrix $\hat{M}^{(k)}_{ij} = 0$ for all $(i,j) \in E$
corresponding to  $c\in \hat{C}$, and all $k\in K$.
Since $\hat{C}$ is a cover
for $X$, all the created paths are disconnected, $\sum_{k\in
K}{({\bf a}^{(k)}({\bf I}-\hat{{\bf M}}^{(k)})^{-1})_{t^{(k)}}}
=0$ and \emph{UME-Decision} is an yes-instance.

Conversely, assume that \emph{UME-Decision} is a yes-instance. Let $Y$
be the set of interdicted nodes.  For $y\in Y$, there is element $y$
of $C$.  Since all the evaders are disconnected from their target and
each evader represents a element in $X$, $Y \subseteq C$ covers $X$
and $|Y| \le \beta$. Hence, \emph{Set Cover} is a yes-instance.
Therefore, \emph{UME-Decision} is NP-complete. \qed

\end{proof}

This proof relies on multiple evaders and it remains an open problem to show 
that UME is NP-hard with just a single evader.
We conjecture that the answer is positive because
the more general problem of interdicting a single unreactive evader
having an arbitrary (non-Markovian) path is NP-hard.
This could be proved by creating from a single such evader 
several Markovian evaders such that 
the evader has an equal probability of following
the path of each of the Markovian evaders in the proof above.

Thus far no consideration was given to the problem where
the cost $c_{ij}$ of interdicting an edge $(i,j)$ is not fixed but rather
is a function of the edge.  
This could be termed the ``budgeted'' case as opposed to the ``unit
cost'' case discussed so far.  However, the budgeted case is NP-hard 
as could be proved through reduction from the knapsack problem
to a star network with ``spokes'' corresponding to items.

\section{An Efficient Interdiction Algorithm}
\label{sec:blind}

The solution to the UME problem can be efficiently approximated using a greedy algorithm
by exploiting submodularity. In this section we prove that
the UME problem is submodular, construct
a greedy algorithm, and examine the algorithm's performance. 
We then show how to improve the algorithm's speed
by further exploiting the submodular structure using a ``priority''
evaluation scheme and ``fast initialization''.

\subsection{Submodularity of the interdiction problem}
In general, a function is called submodular if the rate of increase decreases
monotonically, which is akin to concavity.
\begin{definition}
A real-valued function on a space $S$, $f:S\to\mathbb{R}$ is \emph{submodular}~\cite[Prop. 2.1iii]{Nemhauser78} 
if for any subsets $S_{1}\subseteq S_{2}\subset S$ and any $x\in S\smallsetminus S_2$ it satisfies
\begin{equation}
f\left(S_{1}\cup \{x\}\right)-f\left(S_{1}\right)\geq f\left(S_{2}\cup \{x\}\right)-f\left(S_{2}\right) \,.\label{eq:submodular}
\end{equation}
\end{definition}

\begin{lem}
$J({\bf r})$ is submodular on the set of interdicted edges.
\end{lem}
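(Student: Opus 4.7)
The plan is to express the evader-reaches-target probability $P(\mathbf{r}) \equiv (\mathbf{a}(\mathbf{I}-\hat{\mathbf{M}})^{-1})_t$ as a weighted sum over walks from a source to $t$, and then show that each walk's contribution is \emph{supermodular} in the interdicted-edge set $R = \{e : r_e = 1\}$. Since $J = 1 - P$ and (super)modularity is preserved under nonnegative linear combinations, submodularity of $J$ will follow immediately, and the multi-evader case handled by (\ref{eq:weighted-cost}) is absorbed into that same linearity step with the weights $w^{(k)} \geq 0$.

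First I would invoke the Neumann series $(\mathbf{I}-\hat{\mathbf{M}})^{-1} = \sum_{n \geq 0}\hat{\mathbf{M}}^n$. This converges because the absorbing-chain hypothesis on $\mathbf{M}$ ensures $(\mathbf{I}-\mathbf{M})^{-1}$ is finite, and every entry of $\hat{\mathbf{M}}$ is dominated entrywise by the corresponding entry of $\mathbf{M}$. Expanding the matrix powers and collecting terms according to the underlying walk gives
\[
P(R) \;=\; \sum_{w}\alpha_w \prod_{e \in E(w)\cap R}(1-d_e)^{n_e(w)},
\]
where the outer sum is over finite walks $w$ from some source $i$ to $t$, $\alpha_w = a_i\prod_{e\in w} M_e \geq 0$ is the unperturbed probability that the evader traces $w$, $E(w)$ is the set of distinct edges of $w$, and $n_e(w)$ is the multiplicity of $e$ in $w$. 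The crucial point is that $\alpha_w$ does not depend on $R$.

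Next I would verify supermodularity of each single-walk term $f_w(R) = \alpha_w\prod_{e\in E(w)\cap R}(1-d_e)^{n_e(w)}$ by a one-line calculation. Fix $R_1 \subseteq R_2$ and $e \notin R_2$, and set $c = (1-d_e)^{n_e(w)}$ if $e \in E(w)$ and $c = 1$ otherwise. Then $f_w(R_i\cup\{e\}) - f_w(R_i) = (c-1)\,f_w(R_i)$ for $i=1,2$. Because $c - 1 \leq 0$ and $f_w(R_1) \geq f_w(R_2) \geq 0$ (adding edges to $R$ multiplies $f_w$ only by factors in $[0,1]$), we get
\[
f_w(R_1\cup\{e\}) - f_w(R_1) \;\leq\; f_w(R_2\cup\{e\}) - f_w(R_2),
\]
which is exactly supermodularity of $f_w$. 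Summing over walks and then over evaders with nonnegative weights preserves supermodularity, so $P$ is supermodular and $J = 1 - P$ is submodular.

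The main technical obstacle is justifying the walk expansion rigorously, in particular the reordering of the double sum (over $n$ and over walks of length $n$) and the regrouping of repeated edges within a walk. Absolute convergence of $\sum_n \hat{\mathbf{M}}^n$ under the absorbing hypothesis makes this routine, but it is the one place that genuinely uses the model assumptions on $\mathbf{M}$; once that is in hand, the submodularity reduces to the elementary observation that a product of factors in $[0,1]$, viewed as a function of which factors are switched on, is supermodular.
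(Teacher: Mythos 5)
Your proof is correct and follows essentially the same route as the paper's: decompose the objective over the evader's trajectories, observe that each trajectory's contribution is a nonnegative constant times a product of survival factors in $[0,1]$ indexed by the interdicted edges it uses, and conclude increasing/decreasing differences term by term before summing (the paper phrases this as submodularity of the capture probability $P(p|S)=Q(p)\bigl(1-\prod_{(i,j)\in p\cap S}(1-d_{ij})\bigr)$ rather than supermodularity of the reach probability, which is the same statement up to sign). If anything, your version is slightly more careful than the paper's, since you derive the decomposition from the Neumann series and track edge multiplicities $n_e(w)$ for walks that revisit edges, a point the paper's path-based notation glosses over.
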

\begin{proof}
First, note that it is sufficient to consider a single evader 
because in Eq.~(\ref{eq:weighted-cost}), $J({\bf r})$ is a convex combination
of $k$ evaders~\cite[Prop. 2.7]{Nemhauser78}.
For simplicity of notation, we drop the superscript $k$ in the rest of the proof.

Let $S = \{(i,j)\in E|r_{ij} = 1\}$ be the interdiction set
and let $J(S)$ be the probability of interdicting the evader using $S$,
and let $Q(p)$ be the probability of the evader taking a path $p$ to 
the target. 
On path $p$, the probability of interdicting the evader with 
an interdiction set $S$ is 
\begin{equation}
P(p|S) = Q(p)\left(1-\prod_{(i,j)\in p\cap S}{(1-d_{ij})}\right)\,.
\end{equation}
Moreover, 
\begin{equation}
    J(S)=\sum_{p}{P(p|S)}\,. \label{eq:evader-sum}
\end{equation}

If an edge $(u,v)\notin S$ is added to the interdiction set $S$ 
(assuming $(u,v)\in p$), the probability of interdicting 
the evader in path $p$ increases by 
$$
P(p|S\cup\{(u,v)\}) - P(p|S) = Q(p)d_{uv}\prod_{(i,j)\in p\cap S}{(1-d_{ij})}\,,
$$
which can be viewed as 
the probability of taking the path $p$ times the probability of being
interdicted at $(u,v)$ but not being interdicted elsewhere along $p$.
If $(u,v) \in S$ or $(u,v)\notin p$ then adding $(u,v)$ has, of
course, no effect: $P(p|S\cup\{(u,v)\}) - P(p|S) =0$. 

Consider now two interdiction sets $S_1$ and $S_2$ such that $S_1 \subset S_2$. 
In the case where $(u,v) \notin S_1$ and $(u,v)\in p$, we have 
\begin{eqnarray}
P(p|S_1\cup \{(u,v)\}) - P(p|S_1) 
& = &   Q(p)d_{uv}\prod_{(i,j)\in p\cap S_1}{(1-d_{ij})}\,,\label{eq:subm1}\\
& \ge & Q(p)d_{uv}\prod_{(i,j)\in p\cap S_2}{(1-d_{ij})}\,,\label{eq:subm2}\\
& \ge & P(p|S_2\cup \{(u,v)\})-P(p|S_2)\,.\label{eq:subm3}
\end{eqnarray}
In the above (\ref{eq:subm2}) holds because an edge $(u',v') \in
\left(S_2\smallsetminus S_1 \right)\cap p$ would contribute
a factor of $(1-d_{u'v'}) \le 1$.
The inequality (\ref{eq:subm3})
becomes an equality iff $(u,v) \notin S_2$. 
Overall
(\ref{eq:subm3})
holds true for any path and becomes an equality when $(u,v) \in S_1$. 
Applying the sum of Eq.~(\ref{eq:evader-sum}) gives
\begin{equation}
J(p|S_1\cup \{(u,v)\}) - J(p|S_1) \ge J(p|S_2\cup \{(u,v)\}) - J(p|S_2)\,,  
\end{equation}
and therefore $J(S)$ is submodular.\qed
\end{proof}

Note that the proof relies on the fact that the evader does
not react to interdiction. If the evader did react then it would no
longer be true in general that 
$P(p|S) = Q(p)\left(1-\prod_{(i,j)\in p\cap S}{(1-d_{ij})}\right)$
above.  Instead, the product may show explicit dependence on paths other than $p$, or
interdicted edges that are not on $p$. 
Also, when the evaders are not Markovian the proof is still valid because specifics of evader motion
are contained in the function $Q(p)$.

\subsection{Greedy algorithm}

Submodularity has a number of important theoretical and algorithmic
consequences. Suppose (as is likely in practice) that the edges are
interdicted incrementally such that the interdiction set
$S_{l}\supseteq S_{l-1}$ at every step $l$.
Moreover, suppose at each step, the interdiction set
$S_{l}$ is grown by adding the one edge that gives the greatest increase
in $J$. This defines a greedy algorithm, Alg.~\ref{al:greedy}.

\begin{algorithm}
    \caption{Greedy construction of the interdiction set $S$ with
      budget $B$ for a graph $G(N,E)$.
}
\begin{algorithmic}\label{al:greedy}

\STATE $S\leftarrow\varnothing$

\WHILE {$B>0$}

\STATE $x^*\leftarrow\varnothing$
\STATE $\delta^*\leftarrow -1$

\FORALL {$x\in E\smallsetminus S$}
\STATE $ \Delta(S,x):= J\left(S\cup\left\{ x\right\} \right)-J\left(S\right)$
\IF{$\Delta(S,x) > \delta^*$}
    \STATE $x^* \leftarrow \{x\}$
    \STATE $\delta^* \leftarrow \Delta(S,x)$
\ENDIF
\ENDFOR

\STATE $S\leftarrow S\cup x^*$

\STATE $B\leftarrow B-1$
\ENDWHILE
\STATE \textbf{Output}(S)
\end{algorithmic}
\end{algorithm}

The computational time is $O(B|N|^{3}|E|)$ for each evader,
which is strongly polynomial since $|B|\leq |E|$.
The linear growth in this bound as a function of the number of evaders
could sometimes be significantly reduced. 
Suppose one is interested in interdicting flow $f(s,t)$ 
that has a small number of sources 
but a larger number of targets.
In the current formulation the cost grows linearly in the number of
targets (evaders) but is independent of the number of sources.
Therefore for this $f(s,t)$ it is advantageous to reformulate UME 
by inverting the source-target relationship
by deriving a Markov process which describes 
how an evader moves from a given source $s$ to each of the targets.  
In this formulation the cost would be independent of the number
of targets and grow linearly in the number of sources.

\subsection{Solution quality}

The quality of the approximation can be bounded as a fraction of the
optimal solution by exploiting the submodularity property~\cite{Nemhauser78}.
In submodular set functions such as $J(S)$ there is an interference between
the elements of $S$ in the sense that sum of the individual contributions
is greater than the contribution when part of $S$.
Let $S_{B}^{*}$ be the optimal interdiction set with a budget $B$
and let $S_{B}^{g}$ be the solution with a greedy algorithm.
Consider just the first edge $x_1$ found by the greedy algorithm.
By the design of the greedy algorithm the gain from $x_1$ is greater
than the gain for all other edges $y$, including any of the edges in the optimal set $S^*$.
It follows that 
\begin{equation}
    \Delta(\varnothing,x_1) B \geq \sum_{y\in S_{B}^{*}} \Delta(\varnothing,y) \geq J(S_{B}^{*})\,.
\end{equation}
Thus $x_1$ provides a gain greater than the average gain for all the edges in $S_{B}^{*}$, 
\begin{equation}
\Delta(\varnothing,x_1)\geq\frac{J(S_{B}^{*})}{B}\,.
\end{equation}
A similar argument for the rest of the edges in $S_{B}^{g}$ gives the bound,
\begin{equation}
J(S_{B}^{g})\geq\left(1-\frac{1}{e}\right) J(S_{B}^{*}) \,,
\end{equation}
where $e$ is Euler's constant~\cite[p.268]{Nemhauser78}. Hence, the greedy algorithm achieves at least $63\%$ of the optimal
solution. 

This performance bound depends on the assumption that the cost of an
edge is a constant.  Fortunately, good discrete optimization
algorithms for submodular functions are known even for the case where
the cost of an element (here, an edge) is variable.  These algorithms
are generalizations of the simple greedy algorithm and provide a
constant-factor approximation to the
optimum~\cite{Khuller99,Krause05}.  Moreover, for any particular
instance of the problem one can bound the approximation ratio, and
such an ``online'' bound is often better than the ``offline'' \emph{a priori}
bound~\cite{Leskovec07}.

\subsection{Exploiting submodularity with Priority Evaluation}

In addition to its theoretical utility, 
submodularity can be exploited to compute the same solution much
faster using a priority evaluation scheme.
The basic greedy algorithm recomputes the objective function change
$\Delta(S_l,x)$ for
each edge $x\in E\smallsetminus S_l$ at each step $l$. 
Submodularity, however, implies that the gain $\Delta(S_l,x)$ from adding
any edge $x$ would be less than or equal to 
the gain $\Delta(S_k,x)$ computed at any earlier step $k<l$.
Therefore, if at step $l$ for some edge $x'$, 
we find that $\Delta(S_l,x')\ge\Delta(S_k,x$)
for all $x$ and any past step $k\leq l$, then $x'$ is
the optimal edge at step $l$; there is no need for further computation
(as was suggested in a different context~\cite{Leskovec07}.) 
In other words, one can use stale values of $\Delta(S_k,x)$ 
to prove that $x'$ is optimal at step $l$. 

As a result, it may not be necessary to compute $\Delta(S_l,x)$
for all edges $x\in E\smallsetminus S$ at every iteration. Rather,
the computation should prioritize the edges in descending order of
$\Delta(S_l,x)$. This ``lazy'' evaluation algorithm
is easily implemented with a priority queue which stores the gain
$\Delta(S_k,x)$ and $k$ for each edge where $k$ is the step at which
it was last calculated. (The step information $k$ determines whether
the value is stale.)

The priority algorithm (Alg.~\ref{al:priority}) combines lazy evaluation with the following fast initialization step.
Unlike in other submodular problems, in UME
one can compute $\Delta(\varnothing,x)$ simultaneously for all edges $x\in E$ because
in this initial step, $\Delta(\varnothing,x)$ is just the
probability of transition through edge $x$ multiplied by the interdiction efficiency $d_x$,
and the former could be found for all edges in just one operation.
For the ``non-retreating'' model of Ref. \cite{Gutfraind08markovian}
the probability of transition through $x=(i,j)$ is 
just the expected number of transitions though $x$ because 
in that model an evader moves through $x$ at most once.
This expectation is given by the $i,j$ element in ${\bf a}({\bf I}-{\bf M})^{-1} \odot {\bf M}$ 
(derived from Eq.~(\ref{eq:nodehits})).
The probability is multiplied by the weight of the evader and then by $d_x$:
$\Delta(\varnothing,x) = \sum_k{\left({\bf a}^{(k)}({\bf I}-{\bf M}^{(k)})^{-1}\right)_i M^{(k)}_{ij} w^{(k)} d_{x}}$.
In addition to these increments, for disconnected graphs the objective $J(S)$ also contains the constant 
term $\sum_k{w^{(k)}\left(\sum_{i\in Z^{(k)}}{a_i}\right)}$,
where $Z^{(k)}\subset N$ are nodes from which evader $k$ cannot reach his target $t^{(k)}$.

In subsequent steps this formula is no longer valid
because interdiction of $x$ may reduce the probability of motion through other
interdicted edges.
Fortunately, in many instances of the problem the initialization is the most
expensive step since it involves computing the cost for all edges in the graph.
As a result of the two speedups the number of cost evaluations could theoretically 
be linear in the budget and the number of evaders
and independent of the size of the solution space (the number of edges).

\begin{algorithm}[!ht]
    \caption{Priority greedy construction of the interdiction set $S$ with budget $B$}
\begin{algorithmic}\label{al:priority}

\STATE $S\leftarrow\varnothing$

\STATE $PQ\leftarrow\varnothing$ \COMMENT{Priority Queue: $(value,data,data)$}

\FORALL {$x = (i,j) \in E$}

\STATE $\Delta(x) \leftarrow $\COMMENT{The cost found using fast initialization}

\STATE $PUSH\left(PQ, \left(\Delta(x), x, 0\right)\right)$

\ENDFOR

\STATE $s\leftarrow 0$

\WHILE {$B>0$}

\STATE $s\leftarrow s+1$

\LOOP 

    \STATE $\left(\Delta(x), x, n\right) \leftarrow POP(PQ)$

    \IF{$n = s$}

        \STATE $S\leftarrow S\cup\{x\}$ 

        \STATE break

    \ELSE
 
        \STATE $\Delta(x) \leftarrow J\left(S\cup\left\{ x\right\} \right)-J\left(S\right)$
        
        \STATE $PUSH\left(PQ, \left(\Delta(x), x, s\right)\right)$

    \ENDIF

\ENDLOOP 

\STATE $B\leftarrow B-1$
\ENDWHILE
\STATE \textbf{Output}(S)
\end{algorithmic}
\end{algorithm}

The performance gain from priority evaluation can be very significant. In many
computational experiments, the second best edge from the previous step was the
best in the current step, and frequently only a small fraction of the
edges had to be recomputed at each iteration. 
In order to systematically gauge the improvement in performance, 
the algorithm was tested on $50$ synthetic interdiction problems. 
In each case, the underlying graph
was a $100$-node Geographical Threshold Graph (GTG),
a possible model of sensor or transportation
networks~\cite{Bradonjic-2007-wireless},
with approximately $1600$ directed edges (the threshold parameter was set at $\theta=30$). 
Most of the networks were connected.
We set the cost of traversing an edge to $1$,
the interdiction efficiency $d_{x}$ to $0.5$, $\forall x\in E$, and the budget to $10$. 
We used two evaders with uniformly distributed source nodes 
based on the model of \cite{Gutfraind08markovian} with an equal mixture of $\lambda=0.1$
and $\lambda=1000$. For this instance of the problem the priority algorithm
required an average of $29.9$ evaluations of the objective as compared to
$31885.2$ in the basic greedy algorithm - a factor of $1067.1$ speedup.

The two algorithms find the same solution, but the basic greedy
algorithm needs to recompute the gain
for all edges uninterdicted edges at every iteration, while the
priority algorithm can exploit fast initialization and stale computational values. 
Consequently, the former algorithm uses approximately $B|E|$ cost
computations, while the latter typically uses much fewer
(Fig.~\ref{fig:greedyPerf}a).

Simulations show that for the priority algorithm the number of edges did not
seem to affect the number of cost computations
(Fig.~\ref{fig:greedyPerf}b), in agreement with the theoretical limit.
Indeed, the only lower bound for the number of cost computations is
$B$ and this bound is tight (consider a graph with $B$ evaders each of
which has a distinct target separated from each evader's source by
exactly one edge of sufficiently small cost).  
The priority algorithm performance gains were
also observed in other example networks.%
\footnote{Specifically, the simulations were a two evader
problem on a grid-like networks consisting of a lattice (whose dimensions were
grown from $8$-by-$8$ to $16$-by-$16$) with random edges added at every node.
The number of edges in the networks grew from approximately $380$ to $1530$
but there was no increasing trend in the number of cost evaluations.}
\begin{figure}
\begin{center}
\includegraphics[width=0.5\columnwidth]{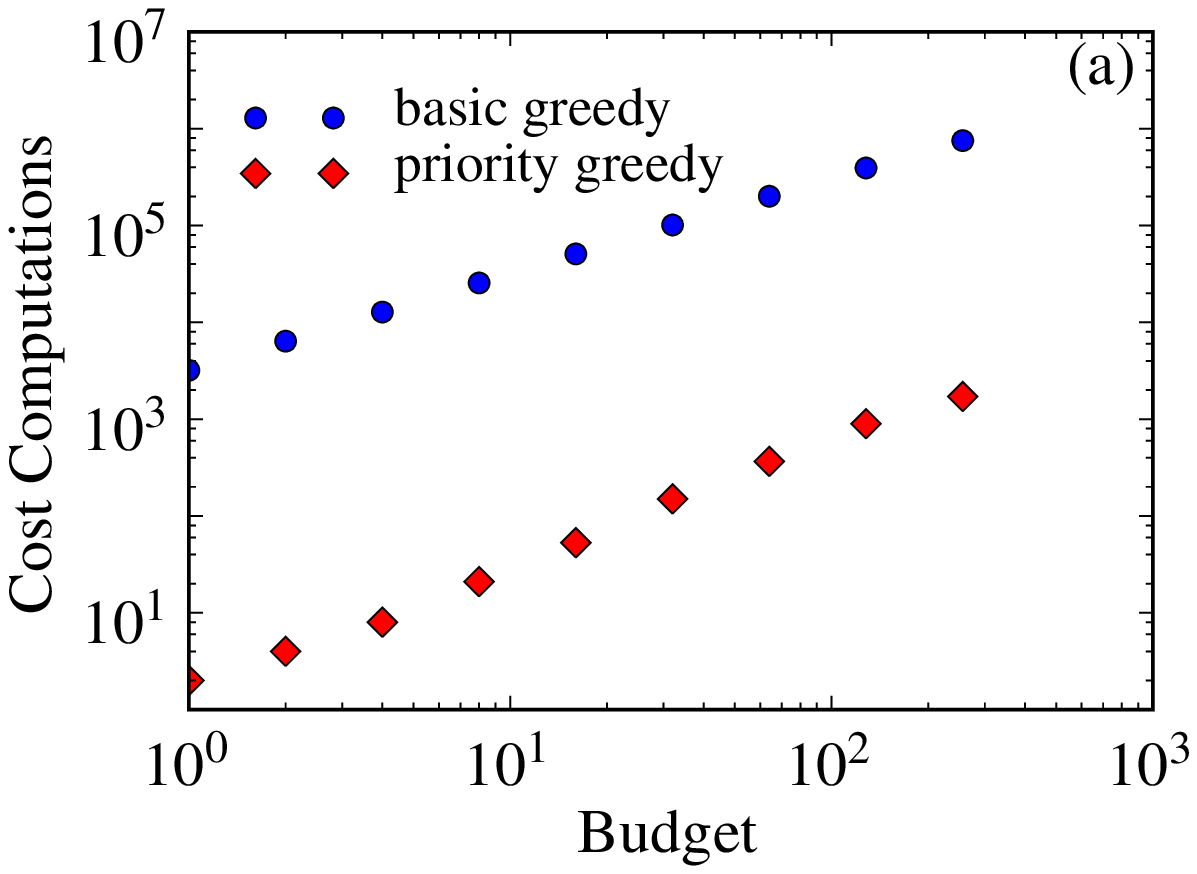}%
\includegraphics[width=0.5\columnwidth]{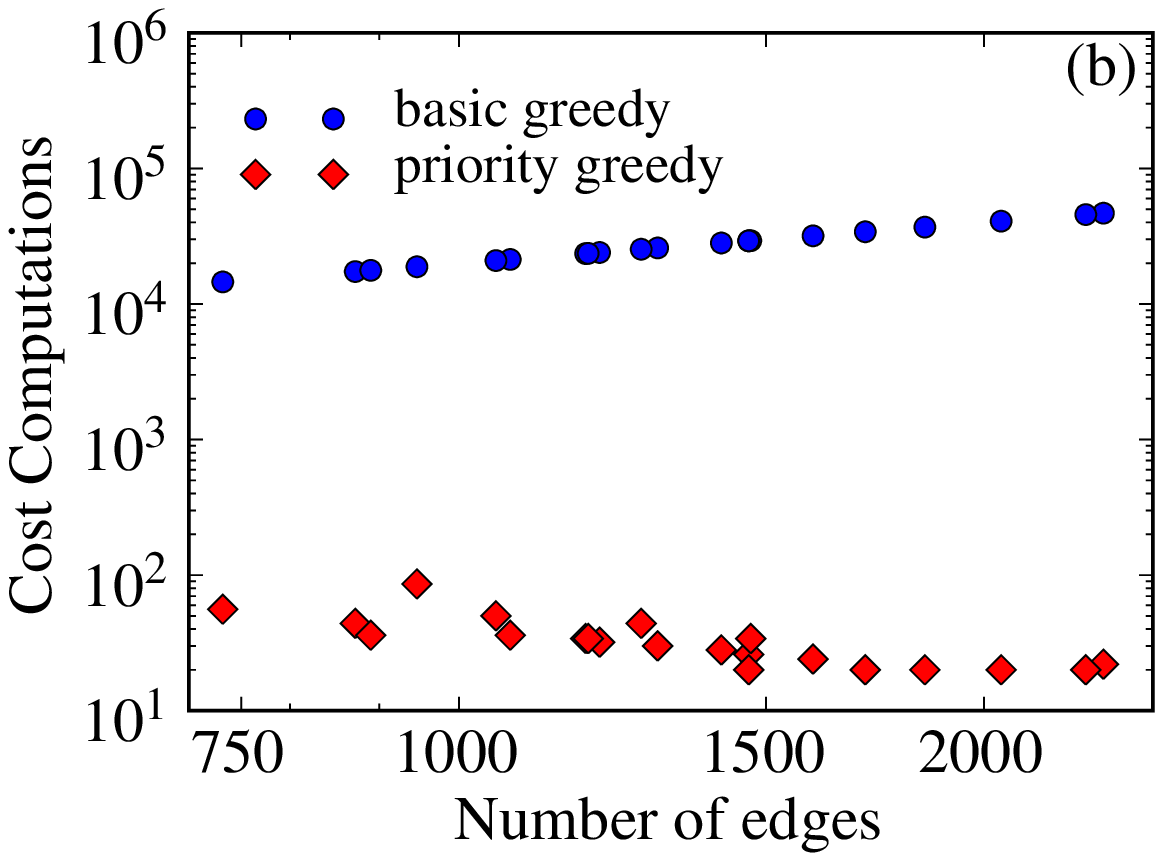}
\end{center}
\caption{Comparison between the basic greedy (blue circles) 
and the priority greedy algorithms (red diamonds) for
the number of cost evaluations as a function of 
(a) budget,
 and 
(b) number of edges.
In (a) each point is the average of $50$ network interdiction problems.
The average coefficient of variation 
(the ratio of the standard deviation to the mean)
is $0.10$ for basic greedy and $0.15$ for the priority greedy.
Notice the almost perfectly linear trends as a function of budget
(shown here on a log-log scale, the power $\approx 1.0$ in both.) 
In (b), the budget was fixed at $10$ and the number of edges was increased
by decreasing the connectivity threshold parameter
from $\theta=50$ to $\theta=20$ to 
represent, e.g., increasingly dense transportation networks.}
\label{fig:greedyPerf}
\end{figure}

The priority algorithm surpasses a
benchmark solution of the corresponding mixed integer program (See Appendix)
using a MIP solver running CPLEX (version 10.1) in consistency, 
time, and space.  For example, in runs on $100$-node GTG networks with
$4$ evaders and a budget of $10$, the priority algorithm terminates in $1$ to $20$ seconds, 
while CPLEX terminated in times ranging from under $1$ second to $9.75$ hours 
(the high variance in CPLEX run times, even on small problems, made
systematic comparison difficult.)
The difference in solution optimality was zero in the majority of runs.  
In the hardest problem we found (in terms
of its CPLEX computational time - $9.75$ hours), the priority
algorithm found a solution at $75\%$ of the optimum in less than $10$ seconds.

For our implementation, memory usage in the priority algorithm never
exceeded $300$MiB.
Further improvement could be made by re-implementing the priority 
algorithm so that it would
require only order $O(|E|)$ to store both 
the priority queue and the vectors of Eq.~(\ref{eq:evader-cost}).
In contrast, the implementation in CPLEX repeatedly used over $1$GiB for the
search tree.  As was suggested from the complexity proof, in runs where
the number of evaders was increased from $2$ to $4$ the computational
time for an exact solution grew rapidly.

\section{Outlook}

The submodularity property of the UME problem provides a rich source
for algorithmic improvement. In particular, there is room for more
efficient approximation schemes and practical value in their
invention.  Simultaneously, it would be interesting to classify the
UME problem into a known approximability class.
It would also be valuable to investigate various trade-offs
in the interdiction problem, such as the trade-off between quality and
quantity of interdiction devices.

As well, to our knowledge little is known about the accuracy of the assumptions
of the unreactive Markovian model or of the standard max-min model in various applications.
The detailed nature of any real instance of network interdiction 
would determine which of the two formulations is more appropriate.

\subsection*{Acknowledgments}
AG would like to thank Jon Kleinberg for inspiring lectures, David
Shmoys for a helpful discussion and assistance with software, and
Vadas Gintautas for support.  Part of this work was funded by the
Department of Energy at Los Alamos National Laboratory under contract
DE-AC52-06NA25396 through the Laboratory Directed Research and
Development Program.

\bibliographystyle{splncs}

\bibliography{interdict}

\section*{Appendix: Mixed integer program for UME}

In the unreactive Markovian evader interdiction (UME) problem
an evader $k\in K$ is sampled from a source distribution ${\bf a}^{(k)}$,
and moves to a sink $t^{(k)}$ with a path specified by the matrix ${\bf M}^{(k)}$.
This matrix is the Markov transition matrix with zeros in the row of the absorbing state (sink). 
The probability that the evader arrives at $t^{(k)}$ is $({\bf a}^{(k)}({\bf I}-{\bf M}^{(k)})^{-1})_{t^{(k)}}$
and is $1$ without any interdiction (removal of edges).

\newpage
\noi \textbf{Notation summary}
\begin{description}
{\setlength{\leftmargin}{3cm}}
\item $G(N,E)$: simple graph with node and edge sets $N$ and $E$, respectively.
\item $K$: the set of evaders.
\item $w^{(k)}$: probability that the evader $k$ occurs.
\item $a^{(k)}_i$: probability that node $i$ is the source node of evader $k$.
\item $t^{(k)}$: the sink of evader $k$.
\item ${\bf M}^{(k)}$: the modified transition matrix for the evader $k$.
\item $d_{ij}$: the conditional probability that interdiction of edge $(i,j)$ would remove an evader who traverses it.
\item $B$: the interdiction budget.
\item $\pi^{(k)}_i$: decision variable on conditional probability of node evader $k$ traversing node $i$.
\item $r_{ij}$: interdiction decision variable, $1$ if edge $(i,j)$ is interdicted and $0$ otherwise.
\end{description}

\begin{definition}\emph{Unreactive Markovian Evader} interdiction (UME) problem
\begin{eqnarray*}
\min_{{\bf r}} && H({\bf r}) = \sum_{k\in K}{w^{(k)} h^{(k)}({\bf r})}\,,\\
\mbox{\rm s.t. }  & & \sum_{(i,j)\in E}{r_{ij}} = B\,,\\
& & r_{ij} \in \{0,1\}, \hskip 5pt \forall (i,j) \in E,
\end{eqnarray*}
where
\begin{eqnarray}
 h^{(k)}({\bf r}) =& & \min_{\pi} \pi_{t^{(k)}}\,, \nonumber\\
\mbox{\rm s.t.}& & \pi^{(k)}_i -\sum_{(j,i)\in
  E}{(M^{(k)}_{ji}-M^{(k)}_{ji}d_{ji}r_{ji})\pi^{(k)}_j} = a^{(k)}_i,
\hskip 5pt \forall i \in N\,,
 \label{constraint_nonlinear} \\
 & & \pi^{(k)}_{i} \ge 0, \hskip 5pt \forall i \in N.
\end{eqnarray}
\end{definition}
The constraint (\ref{constraint_nonlinear}) is nonlinear. We can replace
this with a set of linear constraints, and the evader
problem becomes
\begin{subequations}\label{linear_form}
\begin{eqnarray}
 h^{(k)}({\bf r}) = && \min_{\pi,\theta} \pi_{t^{(k)}}\,, \nonumber\\
\mbox{\rm s.t.} & &\pi^{(k)}_i - \sum_{(j,i)\in E}{\theta^{(k)}_{ji}} = a^{(k)}_i, \hskip 5pt \forall i \in N\,,\nonumber\\
& &\theta^{(k)}_{ji} \ge M^{(k)}_{ji}\pi^{(k)}_j -M^{(k)}_{ji}d_{ji}r_{ji}, \hskip 5pt \forall (j,i) \in E\,, \label{constraint_upper}\\
& &\theta^{(k)}_{ji} \ge M^{(k)}_{ji}(1-d_{ji})\pi^{(k)}_j,   \hskip 5pt  \forall (j,i) \in E\,, \label{constraint_lower} \\
& & \theta^{(k)}_{ij} \ge 0, \hskip 5pt \forall (i,j) \in E\,, \nonumber\\
& & \pi^{(k)}_i \ge 0, \hskip 5pt \forall i \in N\,. \nonumber
\end{eqnarray}
\end{subequations}
If we set $r_{ij} = 0$, the constraint (\ref{constraint_upper}) is dominating (\ref{constraint_lower}), 
and $\theta_{ij}$ will take value $M^{(k)}_{ij}\pi^{(k)}_i$ at optimal because of the minimization. 
If we set $r_{ij} =1$, the constraint (\ref{constraint_lower}) is dominating since $\pi^{(k)}_j \le 1$.
Although formulation (\ref{linear_form}) has an additional variable
$\theta$, at the optimum the two formulations are equivalent because ${\bf
  \pi}$ and $\bf r$ have the same values.

\end{document}